\documentclass[12pt,a4paper]{article}
\usepackage{jheppub}
\usepackage{amsthm,mathtools,bm,booktabs,array,microtype}
\usepackage[T1]{fontenc}
\usepackage{lmodern}
\usepackage{xcolor}

\newtheorem{theorem}{Theorem}[section]
\newtheorem{proposition}[theorem]{Proposition}
\newtheorem{lemma}[theorem]{Lemma}
\newtheorem{corollary}[theorem]{Corollary}
\theoremstyle{remark}\newtheorem{remark}[theorem]{Remark}

\newcommand{\E}{\mathbb E}
\newcommand{\Var}{\mathrm{Var}}

\newcommand{\He}{\mathrm{He}}
\newcommand{\dd}{\mathrm d}
\newcommand{\phiG}{\varphi}
\newcommand{\kfour}{\kappa_4}
\newcommand{\norm}[1]{\left\lVert #1\right\rVert}

\title{Noise-Induced Restoration of Entropy Concavity Near Gaussian Equilibrium}
\author{Fayez Abu-Ajamieh$^1$,}
\affiliation{$^1$Formerly: The Indian Institute of Science, Bangalore, India}
\emailAdd{fayezabuajamieh@gmail.com}

\abstract{We study a quantitative form of noise-induced restoration of entropy concavity for one-dimensional nonequilibrium states relaxing toward a Gaussian thermal equilibrium. A partially thermalized state is modeled by $X_q=\sqrt q\,X+\sqrt{1-q}\,G$, where $G$ is Gaussian equilibrium noise and $q$ is the retained signal fraction. For two independently prepared copies we ask when the entropy of the mixed state $h(\sqrt{1-t}\,X_q+\sqrt t\,Y_q)$ is concave in the mixing parameter. The problem is motivated by entropy geometry along Ornstein--Uhlenbeck/Fokker--Planck relaxation, where the relative score is the nonequilibrium thermodynamic force and relative Fisher information is entropy-production rate. We derive an exact symmetric endpoint-curvature formula, separating a nonnegative Ornstein--Uhlenbeck restoring term from a fourth-cumulant correction, and extend it to nonsmooth generalized-Gaussian states $e^{-a|x|^p}$ for $p>3/2$. For half thermalization of a symmetric log-concave input, posterior-variance identities give a quantitative outer-region theorem confining any endpoint obstruction to two explicitly bounded near-equilibrium sectors. We also give spectral and MMSE reformulations of the auxiliary inequality $A\ge2\delta$, prove it in an explicit nonlinear neighborhood of Gaussian equilibrium, and show how it sharpens the endpoint bounds. The main all-mixing result concerns the leading symmetric non-Gaussian Hermite mode: for $p_a=\phiG(1+a\He_4)$, $0<a\le10^{-2}$, we prove strict entropy concavity for every $0<t<1$ and every $0<q\le5/8$, hence throughout the currently unresolved symmetric range $0<q\le4/7$ for this explicit nonequilibrium family. Numerical searches over broader log-concave families are reported only as supporting evidence.}
\keywords{nonequilibrium statistical mechanics, entropy production, Ornstein--Uhlenbeck relaxation, log-concavity, Fisher information, Hermite modes}
\arxivnumber{}

\begin{document}
\maketitle
\flushbottom

\section{Introduction}

Entropy production and relaxation to equilibrium are central themes in nonequilibrium statistical mechanics. In the simplest one-dimensional model, an overdamped degree of freedom in a quadratic confining potential evolves by the Ornstein--Uhlenbeck/Fokker--Planck equation
\begin{equation}
\partial_\tau p_\tau(x)=\partial_x(xp_\tau(x))+\partial_x^2p_\tau(x),
\label{eq:fp-intro}
\end{equation}
whose stationary state is the Gaussian equilibrium density $\phiG(x)=(2\pi)^{-1/2}e^{-x^2/2}$. Along this flow, relative entropy and relative Fisher information quantify free-energy dissipation and its rate of decay; in physics language the relative score
\begin{equation}
s(x)=\partial_x\log\frac{p(x)}{\phiG(x)}
\end{equation}
acts as the thermodynamic force driving relaxation \cite{Risken1989,Seifert2012,Villani2009}. These quantities also appear naturally in information-theoretic Gaussian channels and in de Bruijn-type entropy identities \cite{GuoShamaiVerdu2005,ZhangAnantharamGeng2018}.

This paper studies a related but different entropy-geometry question. Start from a symmetric nonequilibrium state $X$ with variance one and partially thermalize it by adding Gaussian bath noise,
\begin{equation}
X_q=\sqrt q\,X+\sqrt{1-q}\,G,
\label{eq:thermalized-intro}
\end{equation}
where $0\le q\le1$ measures the fraction of the variance still carried by the initial state. Thus $q=1$ is the unthermalized state and $q=0$ is equilibrium. Given two independently prepared copies $X_q$ and $Y_q$, we ask whether the entropy of their linear mixture,
\begin{equation}
F_q(t)=h\!\left(\sqrt{1-t}\,X_q+\sqrt t\,Y_q\right),\qquad 0<t<1,
\label{eq:physics-Fq}
\end{equation}
is concave in the mixing parameter $t$. Concavity expresses a stability property of entropy under coherent interpolation between two independent preparations.

The same question also sits inside a recent mathematical problem. Ball, Nayar and Tkocz conjectured entropy concavity for weighted sums of independent log-concave random variables \cite{BallNayarTkocz2016}. Luo recently disproved the unrestricted conjecture through an asymmetric mechanism, while identifying a remaining symmetric low-signal regime \cite{Luo2026}. Our viewpoint is that Gaussian noise should restore part of the lost entropy geometry, at least near equilibrium and for physically natural symmetric perturbations. We therefore phrase the results as noise-induced restoration theorems, while keeping the precise probability-theoretic hypotheses explicit.

The most physics-transparent perturbative family is obtained by expanding around equilibrium in Hermite modes. Symmetry removes odd modes, and the variance normalization removes the $\He_2$ mode. The first nontrivial even departure from a Maxwellian/Gaussian equilibrium is therefore
\begin{equation}
p_a(x)=\phiG(x)\bigl(1+a\He_4(x)\bigr),
\label{eq:h4-intro}
\end{equation}
which is the analogue of the leading kurtosis/Sonine correction familiar from kinetic theory \cite{Grad1949,Cercignani1988}. A main theorem below proves strict entropy concavity for this family for every mixing fraction $0<t<1$ and every signal fraction $0<q\le5/8$. This includes the entire symmetric range $0<q\le4/7$ left open by Luo for the general problem, but only for the explicit leading-mode family \eqref{eq:h4-intro}.

The rest of the paper develops the analytic structure behind this restoration phenomenon. First we derive an exact symmetric endpoint-curvature identity. It splits the curvature into a positive Ornstein--Uhlenbeck restoring term and a genuinely non-Gaussian fourth-cumulant correction. Second, we prove endpoint concavity for the whole generalized-Gaussian family $e^{-a|x|^p}$, including the nonsmooth range $3/2<p<3$ via a weak-regularity argument. Third, for half thermalization $U=(X+G)/\sqrt2$ of a symmetric log-concave input, posterior-variance identities give quantitative exclusion regions for endpoint obstructions. Fourth, an exact spectral/MMSE reformulation identifies the auxiliary inequality whose proof would further shrink the remaining near-equilibrium sector. Numerical searches are included only as validation and guidance, not as theorem-level evidence.

The theorem-level results are summarized in table~\ref{tab:main-results}. The full symmetric log-concave conjecture is not proved here.

\begin{table}[t]
\centering
\begin{tabular}{p{0.27\textwidth}p{0.62\textwidth}}
\toprule
Result & Rigorous content \\
\midrule
Endpoint identity & Exact formula for $F''(0^+)$ separating OU restoration from a fourth-cumulant correction.\\
Generalized Gaussians & Endpoint concavity for $f_p\propto e^{-a|x|^p}$, $p>3/2$, using a weak-regularity limit.\\
Half-thermalized log-concave states & Quantitative outer-region exclusion theorem for endpoint obstructions.\\
Spectral/MMSE reformulation & Exact representation of $A-2\delta$ and conditional improvement of the outer thresholds.\\
Gaussian neighborhood & Proof of $A\ge2\delta$ for $\|p/\phiG-1\|_\infty\le0.1884695274\ldots$.\\
Leading Hermite mode & Strict all-$t$ entropy concavity for $\phiG(1+a\He_4)$, $0<a\le10^{-2}$, $0<q\le5/8$.\\
\bottomrule
\end{tabular}
\caption{Theorem-level results established in this work. Exploratory searches are deliberately excluded from the theorem list.}
\label{tab:main-results}
\end{table}

\section{Thermalization setup and notation}

The Gaussian smoothing in \eqref{eq:thermalized-intro} is exactly the time-$\tau$ Ornstein--Uhlenbeck semigroup after setting $q=e^{-2\tau}$:
\begin{equation}
p_q=\mathrm{Law}\left(e^{-\tau}X+\sqrt{1-e^{-2\tau}}\,G\right).
\end{equation}
Thus small $q$ means long thermal contact with the Gaussian bath. The Hermite modes diagonalize this relaxation: a mode of degree $n$ is damped by $q^{n/2}$. In particular, the leading symmetric variance-preserving mode $\He_4$ is damped by $q^2$, a fact used exactly in section~\ref{sec:h4}.

All logarithms are natural. Let $f$ be an even probability density on $\mathbb R$ with mean zero and variance one. Its score is
\begin{equation}
\rho(x)=\frac{f'(x)}{f(x)}=(\log f)'(x).
\label{eq:score}
\end{equation}
When $f=e^{-V}/Z$, one has $\rho=-V'$. The fourth cumulant is
\begin{equation}
\kfour=\E [X^4]-3.
\end{equation}
We write
\begin{equation}
\phiG(x)=\frac{e^{-x^2/2}}{\sqrt{2\pi}}
\end{equation}
for the standard Gaussian density. The relative score of a variance-one density $p=r\phiG$ is
\begin{equation}
s=(\log r)'=(\log p)'+x.
\label{eq:relative-score}
\end{equation}
Integration by parts gives
\begin{equation}
\E[X\rho(X)]=-1,
\qquad
\E[Xs(X)]=0.
\label{eq:score-orthogonality}
\end{equation}
The second identity is an exact consequence of the variance normalization and will remove the first odd Gaussian mode in section~\ref{sec:local}.

\section{Exact symmetric endpoint curvature}
\label{sec:endpoint}

\begin{theorem}[Smooth endpoint identity]
\label{thm:smooth-endpoint}
Let $X,Y$ be independent copies of a smooth strictly positive even variance-one random variable with density $f$ and score $\rho$. Assume sufficient decay for the integrations by parts below. Then
\begin{equation}
\boxed{
F''(0^+)
=-\frac12\E[(\rho'+1)^2]
-\frac{\kfour}{12}
\left(\E[(\rho')^2]-\frac13\E[\rho^4]\right).}
\label{eq:endpoint-main}
\end{equation}
Equivalently, for $f=e^{-V}/Z$,
\begin{equation}
F''(0^+)
=-\frac12\E[(V''-1)^2]
-\frac{\kfour}{12}
\left(\E[(V'')^2]-\frac13\E[(V')^4]\right).
\label{eq:endpoint-V}
\end{equation}
\end{theorem}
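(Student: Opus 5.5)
The plan is to expand the density of $Z_t=\sqrt{1-t}\,X+\sqrt t\,Y$ to second order in $t$, split the expansion into a Gaussian (Ornstein--Uhlenbeck) part and a fourth-cumulant correction, and differentiate the entropy twice.

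\textbf{Expansion of the density.} With $u=z/\sqrt{1-t}$ and $s=\sqrt{t/(1-t)}$, the density of $Z_t$ is
\[
f_t(z)=(1-t)^{-1/2}\,\E\bigl[f(u-sY)\bigr].
\]
Taylor-expand $f(u-sY)$ in $s$. All odd powers of $Y$ drop out by symmetry, and $\E[Y^2]=1$, $\E[Y^4]=3+\kfour$. This gives
\[
\E\bigl[f(u-sY)\bigr]=f(u)+\tfrac{s^2}{2}f''(u)+\tfrac{(3+\kfour)s^4}{24}f''''(u)+O(s^6).
\]
I would split this as a Gaussian part plus a correction:
\[
\E\bigl[f(u-sY)\bigr]=\E\bigl[f(u-sG)\bigr]+\tfrac{\kfour s^4}{24}f''''(u)+O(s^6),
\]
where $G$ is standard Gaussian. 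Since $s^4=t^2+O(t^3)$, the rescaled Gaussian part is exactly the law of $\sqrt{1-t}\,X+\sqrt t\,G$. Writing $\Phi_t$ for that Ornstein--Uhlenbeck evolute, we get
\[
f_t=\Phi_t+\frac{\kfour t^2}{24}f''''+O(t^3).
\]
The correction is already $O(t^2)$, so it enters $F''(0^+)$ only through its linear contribution to $-\int f_t\log f_t$. Since $\int f''''=0$, that contribution is
\[
F''(0^+)=\frac{\dd^2}{\dd t^2}h(\Phi_t)\Big|_{t=0}-\frac{\kfour}{12}\int f''''\log f .
\]

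\textbf{The $\kfour$ term.} Integrate by parts once to get $-\int f''''\log f=\int f'''\rho$. Integrating by parts again gives $\int f'''\rho=-\int f''\rho'$. Now use $f''/f=\rho'+\rho^2$ to obtain $-\E[(\rho')^2]-\E[\rho^2\rho']$. Finally,
\[
\E[\rho^2\rho']=\int(\rho^3/3)'f=-\tfrac13\E[\rho^4].
\]
Putting these together, the correction equals $-\frac{\kfour}{12}\bigl(\E[(\rho')^2]-\frac13\E[\rho^4]\bigr)$, which is exactly the second term of \eqref{eq:endpoint-main}.

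\textbf{The Gaussian term.} Here $t=1-e^{-2\tau}$ is OU time, so $\dd/\dd t=\frac{1}{2(1-t)}\dd/\dd\tau$. By de Bruijn, $\frac{\dd}{\dd\tau}h=\E[\rho_\tau^2]-1$. The $\tau$-derivative of the Fisher information is given by the Bakry--\'Emery $\Gamma_2$ identity, $\frac{\dd}{\dd\tau}\E[\rho_\tau^2]=-2\E[(\rho_\tau')^2]+2\E[\rho_\tau^2]$.

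Converting to $t$ picks up the chain-rule term from the factor $\frac1{1-t}$. This gives
\[
\tfrac14\bigl(-2\E[(\rho')^2]+2\E[\rho^2]\bigr)+\tfrac12\bigl(\E[\rho^2]-1\bigr).
\]
Using $\E[\rho']=-\E[\rho^2]$, this should collapse to $-\frac12\E[(\rho'+1)^2]$. The exact coefficients are what I would double-check here. As a consistency test, the total must vanish for Gaussian $f$, where $\rho'=-1$ and $\kfour=0$.

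\textbf{The $V$ form.} The version \eqref{eq:endpoint-V} is then immediate from $\rho=-V'$ and $\rho'=-V''$.

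\textbf{Main obstacle.} I expect the hardest part to be analytic justification rather than algebra. One must show that the $O(s^6)$ Taylor remainder, after being inserted into $f_t\log f_t$ and integrated, is $o(t^2)$ uniformly enough to differentiate twice at $t=0^+$. One must also justify the interchange of expectation, differentiation and integration, including the boundary terms in each integration by parts. I would handle this by requiring $f^{(k)}/f$, for $k\le 6$, to have polynomial growth and $f$ to have Gaussian-type tails, which is the ``sufficient decay'' hypothesis. Under that assumption, dominated convergence applies to each term.
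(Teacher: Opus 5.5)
Your proposal is correct and follows the paper's route. You use the same split of the second-order density expansion into the Ornstein--Uhlenbeck ($\mu_4=3$) part plus $\frac{\kfour t^2}{24}f''''$, and you evaluate the correction by the same two integrations by parts and $\E[\rho^2\rho']=-\frac13\E[\rho^4]$. The only difference is that you derive the Gaussian piece via de Bruijn and the $\Gamma_2$ identity, where the paper simply states its value; your coefficients do collapse as hoped, $\frac12(\E[\rho^2]-1)+\frac14(2\E[\rho^2]-2\E[(\rho')^2])=-\frac12\E[(\rho')^2]+\E[\rho^2]-\frac12=-\frac12\E[(\rho'+1)^2]$, since $\E[\rho']=-\E[\rho^2]$.
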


\paragraph{Derivation.}
Set $t=u^2$ and write $Q(u)=F(u^2)$. Symmetry of $Y$ makes $Q$ even. For a smooth density the interpolated density has
\begin{equation}
p_u=f+c_2u^2+c_4u^4+O(u^6),
\end{equation}
where
\begin{align}
c_2&=\frac12(f+xf'+f''),\\
2c_4&=\frac34f+\frac54xf'+\frac14x^2f''+\frac32f''+\frac12xf'''+\frac{\mu_4}{12}f'''',
\end{align}
and $\mu_4=\E [X^4]$. Differentiating the entropy gives
\begin{equation}
F''(0^+)=-\int (2c_4)\log f\,\dd x-\int \frac{c_2^2}{f}\,\dd x.
\label{eq:entropy-second}
\end{equation}
The $\mu_4=3$ part is the normalized Ornstein--Uhlenbeck interpolation and equals
\begin{equation}
-\frac12\E[(\rho'+1)^2].
\end{equation}
The remaining fourth-cumulant term is
\begin{equation}
-\frac{\kfour}{12}\int f''''\log f\,\dd x.
\end{equation}
Two integrations by parts and
\begin{equation}
\E[\rho^2\rho']=-\frac13\E[\rho^4]
\label{eq:score-id}
\end{equation}
give
\begin{equation}
\int f''''\log f
=\E[(\rho')^2]-\frac13\E[\rho^4],
\end{equation}
which proves \eqref{eq:endpoint-main}.

The smoothness hypothesis above is convenient but stronger than the integrability actually needed by the final formula. The following lemma is included because it is essential for the generalized-Gaussian application.

\begin{lemma}[Weak-regularity extension for convex-power potentials]
\label{lem:weak-gg}
Let
\begin{equation}
f_p(x)=Z_p^{-1}e^{-a|x|^p},\qquad p>\frac32,
\end{equation}
with $a$ chosen so that $\E [X^2]=1$. Then $F_p''(0^+)$ exists and is given by \eqref{eq:endpoint-V}, with $V'=ap\,\mathrm{sgn}(x)|x|^{p-1}$ and $V''=ap(p-1)|x|^{p-2}$ almost everywhere.
\end{lemma}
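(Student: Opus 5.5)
We approximate $f_p$ by smooth even variance-one log-concave densities, apply Theorem~\ref{thm:smooth-endpoint} to each approximant, and pass to the limit on both sides of \eqref{eq:endpoint-V}. The right-hand side needs $\E[(V'')^2]$ and $\E[(V')^4]$ to be finite. Since $(V'')^2\propto|x|^{2p-4}$, the first is finite exactly when $2p-4>-1$, i.e.\ $p>3/2$; this is where the hypothesis enters. The quantity $\E[(V')^4]\propto\E|X|^{4p-4}$ is always finite.

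\textbf{Step 1: smooth approximants.} Take $V_\varepsilon(x)=a_\varepsilon(x^2+\varepsilon^2)^{p/2}$, with $a_\varepsilon$ chosen so that $f_\varepsilon=e^{-V_\varepsilon}/Z_\varepsilon$ has variance one. Each $V_\varepsilon$ is smooth, convex for $p\ge1$, and has super-Gaussian or exponential-type tails, so the integrations by parts in Theorem~\ref{thm:smooth-endpoint} are justified. Then $a_\varepsilon\to a$, and $V_\varepsilon'\to V'$, $V_\varepsilon''\to V''$ pointwise on $x\neq0$. We also have the domination $|V_\varepsilon''|\le C(x^2+\varepsilon^2)^{(p-2)/2}\le C'(|x|^{p-2}+1)$. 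For $p<2$ the exponent is negative, so $(x^2+\varepsilon^2)^{(p-2)/2}\le|x|^{p-2}$; for $p\ge2$ the bound is immediate. Dominated convergence then gives convergence of the right-hand side of \eqref{eq:endpoint-V} for $f_\varepsilon$ to that for $f_p$, including $\kappa_4(\varepsilon)\to\kappa_4$.

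\textbf{Step 2: convergence of the left-hand side (main obstacle).} Pointwise convergence of $F_\varepsilon''(0^+)$ to $F_p''(0^+)$ is not automatic, and the existence of $F_p''(0^+)$ must itself be shown. Work with $Q(u)=F(u^2)$ and the density $p_u$ of $\sqrt{1-u^2}X+uY$. I would write $F(t)-F(0)-tF'(0)$ via Taylor's formula with integral remainder, $\int_0^t(t-s)F''(s)\,\dd s$. The aim is to show that $F_\varepsilon''(s)$ converges to $F_p''(s)$ for every $s\in(0,1)$, and that $F_p''(s)\to R$ as $s\downarrow0$, where $R$ denotes the right-hand side of \eqref{eq:endpoint-V}.

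For $s>0$ the density $p_s$ is a convolution with a rescaled copy of $f$. It is therefore as smooth as needed, and it is log-concave. Its second derivative is given by a de~Bruijn/Fisher-type expression, namely an expectation of conditional scores, which converges under the approximation because $f_\varepsilon\to f_p$ in $W^{1,1}$ with uniformly controlled moments.

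The real work is a bound on $|F_\varepsilon''(s)-R_\varepsilon|$ that is uniform in $\varepsilon$ and vanishes as $s\to0$. I would first try to obtain it from the expansion \eqref{eq:entropy-second}. The idea is to write $c_2$ and $c_4$ as distributional derivatives, integrate them by parts against $\log f_\varepsilon$, and reduce everything to expressions in $V_\varepsilon'$ and $V_\varepsilon''$. The remaining error terms would then be controlled by $\E[(V_\varepsilon'')^2]$ and higher moments of $V_\varepsilon'$, which are uniformly bounded by Step~1.

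With such a uniform bound, letting $\varepsilon\to0$ and then $s\to0$ shows that $F_p''(0^+)$ exists and equals $R$. If the uniform estimate proves hard near the singularity at $0$, the fallback is to use instead the finiteness of the weak Fisher-type quantity $\int (f'')^2/f<\infty$. This integrability also holds exactly when $p>3/2$, because $V''\in L^2(f)$ and $(V')^2\in L^2(f)$ there. It would control the $O(u^6)$ remainder of the expansion in an $L^1$-weighted sense.
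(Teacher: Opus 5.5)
Your Step~1 coincides with the paper's argument. You use the same approximants $V_\varepsilon=a_\varepsilon(x^2+\varepsilon^2)^{p/2}$, and your domination $|V_\varepsilon''|\le C(|x|^{p-2}+1)$ is a clean way to get $\E[(V_\varepsilon'')^2]\to\E[(V'')^2]$, $\E[(V_\varepsilon')^4]\to\E[(V')^4]$ and $\kappa_{4,\varepsilon}\to\kfour$.

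The genuine gap is Step~2, which carries the actual content of the lemma. You never show that $F_p''(0^+)$ exists, or that the limits $\varepsilon\to0$ and $s\downarrow0$ commute. You only describe the $\varepsilon$-uniform estimate you would need, and as stated neither of your routes produces it.

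\emph{First route.} The formula \eqref{eq:entropy-second} is an identity at $s=0$ only. Integrating $c_2,c_4$ by parts against $\log f_\varepsilon$ reduces $F_\varepsilon''(0^+)$ to $V_\varepsilon',V_\varepsilon''$ quantities. It says nothing about $F_\varepsilon''(s)-F_\varepsilon''(0^+)$ for $s>0$. Controlling that difference uniformly by Taylor-type bounds would need higher-order quantities, and these are not uniform in $\varepsilon$. On the window $|x|\lesssim\varepsilon$ one has $V_\varepsilon'''\sim\varepsilon^{p-3}$ and $V_\varepsilon''''\sim\varepsilon^{p-4}$. Hence, for $p\neq2$, $\E[(V_\varepsilon''')^2]\sim\varepsilon^{2p-5}\to\infty$ when $p<5/2$, and $\int|f_\varepsilon''''|\,\dd x\to\infty$ when $p<3$.

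\emph{Fallback.} This rests on a false premise. For $3/2<p<3$, $p\neq2$, one has $f_p''''\sim|x|^{p-4}$ near the origin, which is not locally integrable. So $c_4$ is only a distribution, and there is no expansion $p_u=f+c_2u^2+c_4u^4+O(u^6)$. Non-integer powers of $t$ appear beyond the $t^2$ term, which is exactly the caveat stated in the paper's proof. A second-order quantity such as $\int(f'')^2/f$ cannot control a remainder that does not exist in that form.

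The paper avoids the interchange instead of proving it; this is only sketched in appendix~\ref{app:weak}. It works with $f_p$ itself:
\begin{itemize}
\item the second-order $t$-difference quotient of $F_p$ is written through the \emph{weak} first two coefficients of the convolution/scaling semigroup, integrated by parts so that only $V'$ and $V''$ enter (for instance $\int c_2^2/f=\frac14\E[(1+X\rho(X)+\rho'(X)+\rho(X)^2)^2]$);
\item the bounds $V''\in L^2(f)$ and $V'\in L^4(f)$ supply a common integrable majorant;
\item dominated convergence then gives $F_p(t)=F_p(0)+F_p'(0^+)t+\frac12F_p''(0^+)t^2+o(t^2)$.
\end{itemize}
Your Step~1 then identifies this coefficient with the limit of the smooth ones.

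Note also that this Peano-type second-order expansion is all the lemma is used for. You instead aim at the stronger statement $\lim_{s\downarrow0}F_p''(s)=R$. Its rate is governed by exactly the fractional-power corrections that make $\varepsilon$-uniform estimates delicate.
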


\begin{proof}
Use the smooth even convex approximation
\begin{equation}
V_{p,\varepsilon}(x)=a_\varepsilon(x^2+\varepsilon^2)^{p/2},
\end{equation}
where $a_\varepsilon$ is chosen so that the associated density has variance one. For $p>1$ these potentials are convex and smooth. As $\varepsilon\downarrow0$, $a_\varepsilon\to a$, the densities converge in $L^1$ with all polynomial moments needed below, and $V'_{p,\varepsilon}\to V'$ pointwise away from the origin. Moreover
\begin{equation}
|V''_{p,\varepsilon}(x)|^2
\le C_p(x^2+\varepsilon^2)^{p-2}
\end{equation}
on compact sets. The right-hand side is uniformly locally integrable exactly when $p>3/2$. Exponential tails give uniform integrability at infinity. Thus
\begin{equation}
\E[(V''_{p,\varepsilon})^2]\to\E[(V'')^2],
\qquad
\E[(V'_{p,\varepsilon})^4]\to\E[(V')^4],
\qquad
\kappa_{4,\varepsilon}\to\kappa_4.
\end{equation}
Applying theorem~\ref{thm:smooth-endpoint} at fixed $\varepsilon$ and passing to the limit gives the endpoint quadratic coefficient. Appendix~\ref{app:weak} spells out the associated second-order entropy difference quotient. Importantly, no $O(u^6)$ remainder is asserted for $f_p$ itself: for non-even-integer $p$ the density need not be $C^4$ at the origin and noninteger powers of $t$ can appear beyond the $t^2$ term. Only the existence and value of $F''(0^+)$ are used here.
\end{proof}

Define
\begin{equation}
A:=\E[(\rho'+1)^2],
\qquad
B:=\E[(\rho')^2]-\frac13\E[\rho^4].
\label{eq:AB}
\end{equation}
Then endpoint concavity is equivalent to
\begin{equation}
\boxed{\kfour B+6A\ge0.}
\label{eq:scalar-target}
\end{equation}

\section{Generalized Gaussians and two false shortcuts}

Consider the standardized generalized-Gaussian family
\begin{equation}
f_p(x)=Z_p^{-1}e^{-a|x|^p},\qquad p>\frac32.
\end{equation}
The moment formula is
\begin{equation}
\E|X|^r=a^{-r/p}\frac{\Gamma((r+1)/p)}{\Gamma(1/p)},
\qquad
a=\left[\frac{\Gamma(3/p)}{\Gamma(1/p)}\right]^{p/2}.
\label{eq:ggmom}
\end{equation}

\begin{proposition}[Generalized-Gaussian endpoint concavity]
\label{prop:gg}
For every $p>3/2$,
\begin{equation}
F_p''(0^+)\le0,
\end{equation}
with equality only for the Gaussian member $p=2$.
\end{proposition}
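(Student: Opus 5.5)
The plan is to evaluate the scalar target \eqref{eq:scalar-target} in closed form, using the moment formula \eqref{eq:ggmom}. Lemma~\ref{lem:weak-gg} justifies applying \eqref{eq:endpoint-V} for every $p>3/2$, with $V'=ap\,\mathrm{sgn}(x)|x|^{p-1}$ and $V''=ap(p-1)|x|^{p-2}$ a.e. So it suffices to show $\kfour B+6A\ge0$, with equality only at $p=2$, where $A=\E[(V''-1)^2]$ and $B=\E[(V'')^2]-\frac13\E[(V')^4]$.

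I will not try to bound $\kfour B$ against $6A$. Instead I will show the two terms are separately nonnegative: $A\ge0$ is trivial, and I expect $\kfour$ and $B$ to share the sign of $2-p$. From \eqref{eq:ggmom},
\begin{equation*}
\E[(V'')^2]=a^2p^2(p-1)^2\E|X|^{2p-4}=\frac{a^{4/p}}{\Gamma(1/p)}\,p^2(p-1)^2\,\Gamma\!\left(2-\tfrac3p\right),
\end{equation*}
\begin{equation*}
\E[(V')^4]=a^4p^4\E|X|^{4p-4}=\frac{a^{4/p}}{\Gamma(1/p)}\,p^4\,\Gamma\!\left(4-\tfrac3p\right).
\end{equation*}
Here $\Gamma(2-3/p)$ is finite and positive precisely because $p>3/2$, which matches the threshold in Lemma~\ref{lem:weak-gg}. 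The recursion gives
\begin{equation*}
\Gamma\!\left(4-\tfrac3p\right)=\frac{3(p-1)(2p-3)}{p^2}\,\Gamma\!\left(2-\tfrac3p\right).
\end{equation*}
Substituting, the bracket collapses to
\begin{equation*}
B=\frac{a^{4/p}}{\Gamma(1/p)}\,p^2(p-1)\,\Gamma\!\left(2-\tfrac3p\right)\bigl[(p-1)-(2p-3)\bigr]=\frac{a^{4/p}\,p^2(p-1)(2-p)\,\Gamma(2-3/p)}{\Gamma(1/p)} .
\end{equation*}
All prefactors are positive for $p>3/2$, so $\mathrm{sgn}\,B=\mathrm{sgn}(2-p)$.

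Next, $\kfour=\Gamma(5/p)\Gamma(1/p)/\Gamma(3/p)^2-3$. I need $\kfour>0$ for $p<2$ and $\kfour<0$ for $p>2$, which is the classical fact that generalized Gaussians are leptokurtic below $p=2$ and platykurtic above it. My first route is log-convexity of $\Gamma$ applied to the ratio $\Gamma(5s)\Gamma(s)/\Gamma(3s)^2$ with $s=1/p$, showing it is strictly monotone in $s$ and equals $3$ at $s=1/2$. Concretely, I would differentiate $\log$ of the ratio, getting $5\psi(5s)+\psi(s)-6\psi(3s)$. I would then show this is positive using the series $\psi'(x)=\sum_k (x+k)^{-2}$ and an integral/convexity comparison, since $\psi'$ is convex and decreasing. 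If that becomes awkward, a fallback is the scale-mixture/variance-of-$|X|^2$ argument: $\kfour$ vanishes only at $p=2$ and is continuous, so its sign on each side can be fixed at one point, e.g. $p=1$ gives $\kfour=3$ and $p\to\infty$ gives the uniform value $-6/5$. This sign determination is the only step I regard as a genuine obstacle, and it is a one-variable special-function fact.

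Once these signs are established, $\kfour B\ge0$ for all $p>3/2$, and hence $-2F_p''(0^+)\cdot 6=\kfour B+6A\ge 6A\ge0$, giving $F_p''(0^+)\le0$. For equality we need $A=0$, which means $V''=1$ a.e. under $f_p$. Since $V''=ap(p-1)|x|^{p-2}$, this forces $p=2$, and conversely $p=2$ gives $\kfour=0$ and $A=0$. This establishes Proposition~\ref{prop:gg}.
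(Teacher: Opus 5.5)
Your proof is correct and follows the same route as the paper: evaluate $B=C_p(2-p)$ from the moment formula, observe that $\kfour$ has the sign of $2-p$, and conclude $\kfour B+6A\ge 6A\ge0$ from the endpoint identity, with equality forcing $A=0$ and hence $p=2$. Your digamma step (positivity of $5\psi(5s)+\psi(s)-6\psi(3s)$, which holds term by term in the series for $\psi$) proves the kurtosis-sign fact that the paper only asserts, whereas your fallback is circular, because it assumes $p=2$ is the only zero of $\kfour$.
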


\begin{proof}
Lemma~\ref{lem:weak-gg} justifies the endpoint formula throughout the stated range. Direct evaluation using \eqref{eq:ggmom} gives
\begin{equation}
B=C_p(2-p),
\qquad
C_p=a^{4/p}p^2(p-1)\frac{\Gamma(2-3/p)}{\Gamma(1/p)}>0.
\end{equation}
The standardized fourth moment is
\begin{equation}
\mu_4(p)=\frac{\Gamma(5/p)\Gamma(1/p)}{\Gamma(3/p)^2}.
\end{equation}
It is $>3$ for $p<2$, $=3$ at $p=2$, and $<3$ for $p>2$. Therefore $\mathrm{sign}(\kfour)=\mathrm{sign}(B)$ and $\kfour B\ge0$. Equation~\eqref{eq:endpoint-main} gives the result.
\end{proof}

\begin{figure}[t]
\centering
\includegraphics[width=.72\textwidth]{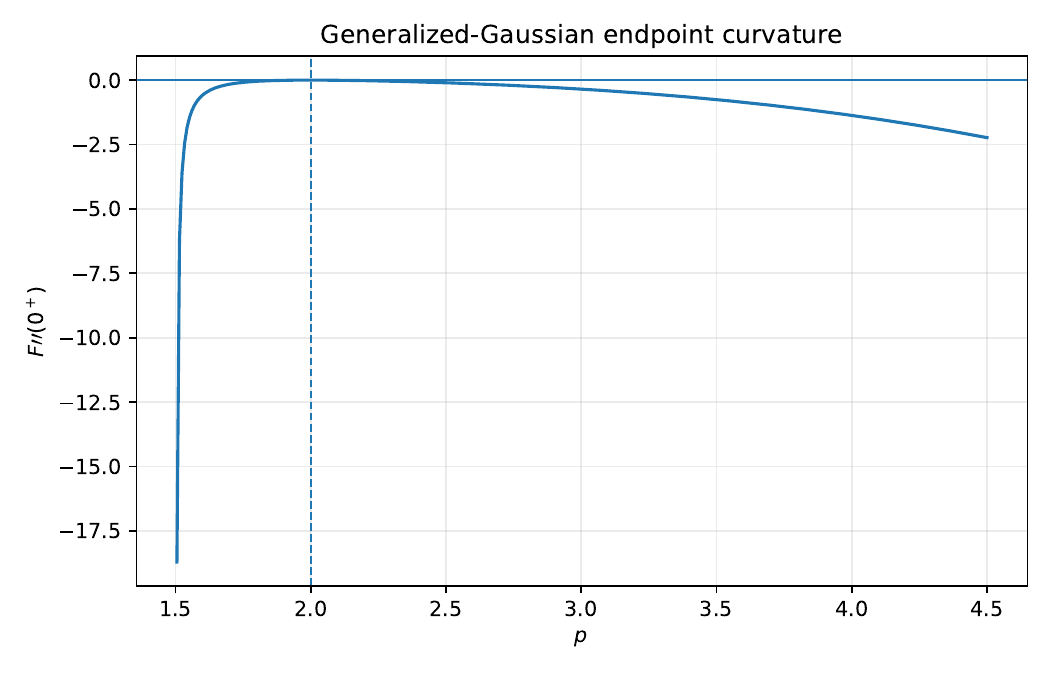}
\caption{Exact endpoint curvature for the standardized generalized-Gaussian family. The plotted formula is valid for all $p>3/2$ by lemma~\ref{lem:weak-gg}; the Gaussian member $p=2$ saturates the result.}
\label{fig:gg}
\end{figure}

The master identity also rules out two tempting proof strategies. First, a pointwise posterior-covariance inequality can fail even when its density-weighted average has the correct sign. Second, the sign rule $\kfour B\ge0$ is not universal. A stable half-Gaussian-smoothed piecewise-convex example gives
\begin{equation}
\kfour=7.718249835618\times10^{-4},\qquad
B=-3.206474139903\times10^{-4},
\end{equation}
while
\begin{equation}
A=3.925033377823\times10^{-3},\qquad
F''(0^+)\simeq-1.9624961\times10^{-3}.
\end{equation}
This is a numerical diagnostic, not a theorem, but it shows that the target is genuinely \eqref{eq:scalar-target} rather than a stronger sign statement.

\section{Half-Gaussian smoothing and the outer-region theorem}
\label{sec:outer}

Let $X$ be symmetric log-concave with mean zero and variance one and put
\begin{equation}
U=\frac{X+G}{\sqrt2}.
\label{eq:halfnoise}
\end{equation}
Let the density of $U$ be $f=e^{-V}/Z$ and define
\begin{equation}
g=V''=-\rho',\qquad h=g-1.
\end{equation}
If
\begin{equation}
v(z)=\Var(X\mid X+G=z),
\end{equation}
then Gaussian-channel differentiation identities \cite{DytsoPoorShamai2021} give
\begin{equation}
g(U)=2(1-v(X+G)).
\label{eq:posterior-g}
\end{equation}
The posterior is $1$-strongly log-concave. Brascamp--Lieb therefore gives $0\le v\le1$ \cite{BrascampLieb1976,SaumardWellner2014}, so
\begin{equation}
0\le g\le2,\qquad |h|\le1.
\label{eq:g-bounds}
\end{equation}
Since $A=\E [h^2]$, set
\begin{equation}
I=\E[\rho^2]=\E [g],
\qquad
\delta=I-1=\E [h].
\end{equation}
Then
\begin{equation}
\delta^2\le A.
\label{eq:deltabound}
\end{equation}
If $K=\E [g^2]$, then
\begin{equation}
K=1+A+2\delta\le(1+\sqrt A)^2.
\label{eq:Kbound}
\end{equation}

The correction $B$ has a centered form. From \eqref{eq:score-id},
\begin{equation}
\E[g\rho^2]=\frac13\E[\rho^4].
\end{equation}
Set $D=g-\rho^2$. Then $\E [D]=0$ and
\begin{equation}
B=\E[hD],
\qquad
\E [D^2]=2K-B.
\end{equation}
Therefore
\begin{equation}
\boxed{B^2\le A(2K-B).}
\label{eq:centeredCS}
\end{equation}

The final input is a sharp fourth-moment window. For a symmetric variance-one log-concave random variable,
\begin{equation}
\frac95\le\E [X^4]\le6.
\label{eq:sharp-fourth}
\end{equation}
These are sharp log-concave moment bounds; the extremal description follows from the log-concave moment problem of Eskenazis, Nayar and Tkocz \cite{EskenazisNayarTkocz2018Moments}. Since fourth cumulants add under independent convolution and scale homogeneously,
\begin{equation}
\kappa_4(U)=\frac14\kappa_4(X),
\end{equation}
so
\begin{equation}
\boxed{-\frac3{10}\le\kfour\le\frac34.}
\label{eq:kurt-range}
\end{equation}

\begin{theorem}[Quantitative outer-region endpoint theorem]
\label{thm:outer}
For the half-Gaussian-smoothed density above, $F''(0^+)\le0$ in each of the following cases:
\begin{enumerate}
\item $\kfour B\ge0$;
\item $\kfour>0$, $B<0$, and
\begin{equation}
A\ge A_+:=\frac{29+4\sqrt7}{729}=0.054297675232\ldots;
\end{equation}
\item $\kfour<0$, $B>0$, and
\begin{equation}
A\ge A_-:=\left(\frac{1+\sqrt{210}}{209}\right)^2=0.005493984879\ldots.
\end{equation}
\end{enumerate}
Hence any endpoint counterexample must lie in one of two explicitly near-Gaussian sign-mismatch sectors.
\end{theorem}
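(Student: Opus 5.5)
The plan is to work directly with the scalar target \eqref{eq:scalar-target}: endpoint concavity holds iff $\kfour B+6A\ge0$. The inputs are the kurtosis window \eqref{eq:kurt-range}, the centered Cauchy--Schwarz inequality \eqref{eq:centeredCS}, and the bound $K\le(1+\sqrt A)^2$ from \eqref{eq:Kbound}.

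\emph{Case 1} is immediate. Since $A=\E[h^2]\ge0$, the condition $\kfour B\ge0$ gives $\kfour B+6A\ge0$.

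\emph{Case 2} ($\kfour>0$, $B<0$). Since $0<\kfour\le 3/4$, we have $\kfour B\ge \tfrac34 B$. It therefore suffices to show $|B|\le 8A$. Write $b=|B|$. Then \eqref{eq:centeredCS} reads $b^2\le A(2K+b)$, that is $b^2-Ab-2AK\le0$. Solving this quadratic gives
\begin{equation}
b\le \tfrac12\bigl(A+\sqrt{A^2+8AK}\bigr).
\end{equation}
The right-hand side is increasing in $K$, so we may substitute $K\le(1+\sqrt A)^2$. The requirement that it be at most $8A$ becomes $A^2+8AK\le 225A^2$, i.e.\ $K\le 28A$. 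This is implied by $1+\sqrt A\le 2\sqrt7\,\sqrt A$, i.e.\ $\sqrt A\ge 1/(2\sqrt7-1)$. Squaring gives
\begin{equation}
A\ge \frac{(2\sqrt7+1)^2}{27^2}=\frac{29+4\sqrt7}{729}=A_+ .
\end{equation}

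\emph{Case 3} ($\kfour<0$, $B>0$). Here $-3/10\le\kfour<0$, so $\kfour B\ge -\tfrac{3}{10}B$. It suffices to show $B\le 20A$. Now \eqref{eq:centeredCS} reads $B^2+AB-2AK\le0$, so
\begin{equation}
B\le\tfrac12\bigl(-A+\sqrt{A^2+8AK}\bigr).
\end{equation}
This bound is again increasing in $K$. The requirement that it be at most $20A$ becomes $A^2+8AK\le 41^2A^2$, i.e.\ $K\le 210A$. Using $K\le(1+\sqrt A)^2$, this follows from $1+\sqrt A\le\sqrt{210}\,\sqrt A$, i.e.\ $\sqrt A\ge 1/(\sqrt{210}-1)=(1+\sqrt{210})/209$. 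This is exactly $A\ge A_-$.

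The final sentence of the theorem then follows. Any counterexample must have $\kfour B<0$ and $A$ below the respective threshold, and \eqref{eq:deltabound} then also forces $|\delta|$ small; that is, the counterexample lies near the Gaussian.

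\emph{Expected difficulty.} The algebra is routine. The step needing care is the justification of the inputs for $U$, not the case analysis. One must check that \eqref{eq:centeredCS} holds with $K=1+A+2\delta$. One must also check that $K\le(1+\sqrt A)^2$, which uses $\delta\le|\delta|\le\sqrt A$ from \eqref{eq:deltabound}. This bound is what lets the thresholds depend on $A$ alone. Finally, the sharp fourth-moment window \eqref{eq:sharp-fourth} must transfer through $\kappa_4(U)=\kappa_4(X)/4$.
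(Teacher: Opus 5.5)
Your proposal is correct and follows the paper's proof essentially step for step: the same reduction to $b\le 8A$ (resp.\ $B\le 20A$) via the kurtosis window, the same quadratic solution of the centered Cauchy--Schwarz inequality, and the same substitution $K\le(1+\sqrt A)^2$. The only cosmetic difference is that you first isolate $K\le 28A$ (resp.\ $K\le 210A$) before inserting the $K$-bound, whereas the paper inserts it first and arrives at the equivalent conditions $27x^2-2x-1\ge0$ and $209x^2-2x-1\ge0$ with $x=\sqrt A$.
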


\begin{proof}
If $\kfour B\ge0$, equation~\eqref{eq:scalar-target} is immediate. Suppose $\kfour>0$ and $B=-b<0$. Equation~\eqref{eq:centeredCS} gives
\begin{equation}
b\le\frac{A+\sqrt{A^2+8AK}}2.
\end{equation}
With $x=\sqrt A$ and \eqref{eq:Kbound},
\begin{equation}
b\le\frac{x^2}{2}+\frac{x}{2}\sqrt{9x^2+16x+8}.
\end{equation}
Since $\kfour\le3/4$, it is enough to have $b\le8A$. This is equivalent to $27x^2-2x-1\ge0$, which yields $A_+$. The negative-kurtosis case is analogous: now
\begin{equation}
b\le-\frac{x^2}{2}+\frac{x}{2}\sqrt{9x^2+16x+8},
\end{equation}
and $|\kfour|\le3/10$ makes $b\le20A$ sufficient. This is equivalent to $209x^2-2x-1\ge0$ and yields $A_-$. 
\end{proof}

\begin{figure}[t]
\centering
\includegraphics[width=.75\textwidth]{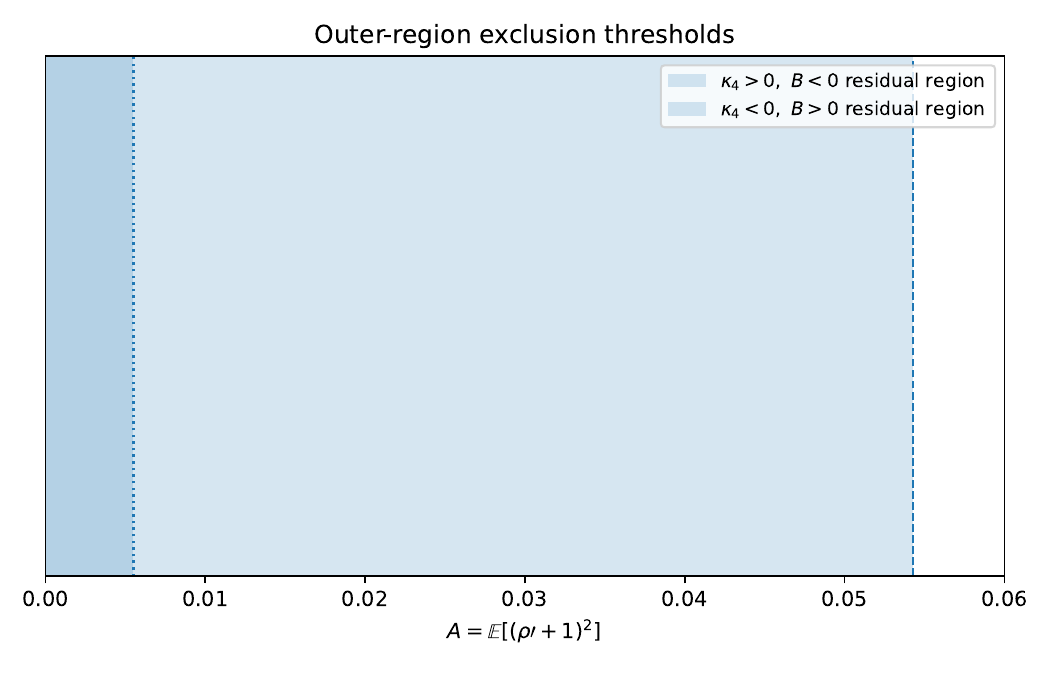}
\caption{The outer-region theorem removes all sign-matched configurations and all sign-mismatched configurations above the indicated information-production thresholds.}
\label{fig:outer}
\end{figure}

\section{Spectral and MMSE reformulations}
\label{sec:spectral}

Let $p=r\phiG$ be an even variance-one density and let $s=(\log r)'$. Define
\begin{equation}
\delta=\E_p[s^2],\qquad A=\E_p[(s')^2].
\label{eq:deltaA}
\end{equation}
The Langevin operator
\begin{equation}
Lf=f''+(\log p)'f',\qquad H=-L,
\end{equation}
is self-adjoint and nonnegative in $L^2(p)$. Since $Hx=x-s$,
\begin{equation}
s=(I-H)x.
\end{equation}
Let $\nu$ be the spectral probability measure of $H$ generated by the coordinate $x$. Since $\E[X^2]=1$,
\begin{equation}
\int\lambda\,\dd\nu(\lambda)=\langle x,Hx\rangle=1.
\end{equation}
Moreover
\begin{align}
\delta&=\int(\lambda-1)^2\,\dd\nu(\lambda),\\
A&=\int\lambda(\lambda-1)^2\,\dd\nu(\lambda),
\end{align}
so
\begin{equation}
\boxed{A-2\delta=\int(\lambda-2)(\lambda-1)^2\,\dd\nu(\lambda).}
\label{eq:spectral}
\end{equation}
This explains why an ordinary spectral-gap estimate alone is insufficient: the sign depends on the coordinate-generated spectral mass relative to both $\lambda=1$ and $\lambda=2$.

At half-Gaussian smoothing, \eqref{eq:posterior-g} gives $s'=2v-1$ and a direct calculation yields
\begin{equation}
\boxed{A-2\delta=4\E[v^2]-1.}
\label{eq:mmse}
\end{equation}
Thus
\begin{equation}
A\ge2\delta
\quad\Longleftrightarrow\quad
\E\!\left[\Var(X\mid X+G)^2\right]\ge\frac14.
\label{eq:mmse2}
\end{equation}
The Gaussian-channel identity $\frac{\dd}{\dd\mathrm{snr}}\mathrm{mmse}=-\E[v^2]$ \cite{GuoShamaiVerdu2005,DytsoPoorShamai2021} gives an equivalent slope formulation.

The relevance of this auxiliary inequality to the endpoint problem is explicit:

\begin{corollary}[Conditional sharpening of theorem~\ref{thm:outer}]
\label{cor:conditional-outer}
In addition to the assumptions of theorem~\ref{thm:outer}, suppose
\begin{equation}
A\ge2\delta.
\label{eq:auxA2d}
\end{equation}
Then the two sign-mismatch thresholds improve to
\begin{equation}
\boxed{A\ge\frac1{26}\quad(\kfour>0,B<0),}
\qquad
\boxed{A\ge\frac1{208}\quad(\kfour<0,B>0).}
\label{eq:improved-thresholds}
\end{equation}
\end{corollary}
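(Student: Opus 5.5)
The plan is to rerun the proof of theorem~\ref{thm:outer} unchanged, except that the crude bound \eqref{eq:Kbound} on $K=\E[g^2]$ is replaced by the sharper bound that \eqref{eq:auxA2d} makes available. First I check that the two meanings of $\delta$ agree. In section~\ref{sec:outer}, $\delta=I-1$. In section~\ref{sec:spectral}, $\delta=\E_p[s^2]$ with $s=\rho+x$. Expanding the square and using \eqref{eq:score-orthogonality} gives
\[
\E[s^2]=I+2\E[X\rho(X)]+1=I-1,
\]
so the two are the same. Then $K=1+A+2\delta$ together with $2\delta\le A$ yields
\[
K\le 1+2A,
\]
which is linear in $A$, whereas \eqref{eq:Kbound} only gives $(1+\sqrt A)^2$. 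Everything else is taken from section~\ref{sec:outer}: the centered Cauchy--Schwarz bound \eqref{eq:centeredCS}, the kurtosis window \eqref{eq:kurt-range}, and the reduction of endpoint concavity to \eqref{eq:scalar-target}. The sign-matched case $\kfour B\ge0$ needs nothing further.

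\emph{Case $\kfour>0$, $B=-b<0$.}
\begin{enumerate}
\item From \eqref{eq:centeredCS}, $b^2\le A(2K+b)$. Solving this quadratic inequality in $b$ and inserting $K\le1+2A$ gives
\[
b\le \tfrac12\Bigl(A+\sqrt{17A^2+8A}\Bigr).
\]
\item Since $\kfour\le 3/4$, the target \eqref{eq:scalar-target} follows once $\tfrac34 b\le 6A$, that is, once $b\le 8A$.
\item It therefore suffices that $\sqrt{17A^2+8A}\le 15A$. Squaring and dividing by $A>0$ turns this into $17A+8\le 225A$, i.e. $A\ge 8/208=1/26$.
\end{enumerate}

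\emph{Case $\kfour<0$, $B=b>0$.}
\begin{enumerate}
\item Now \eqref{eq:centeredCS} reads $b^2\le A(2K-b)$, which gives
\[
b\le \tfrac12\Bigl(-A+\sqrt{17A^2+8A}\Bigr).
\]
\item Since $|\kfour|\le 3/10$, the target \eqref{eq:scalar-target} follows once $b\le 20A$.
\item It suffices that $\sqrt{17A^2+8A}\le 41A$. Squaring gives $17A+8\le 1681A$, i.e. $A\ge 8/1664=1/208$.
\end{enumerate}

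The only delicate point is the first step: confirming that $K=1+A+2\delta$ uses the same $\delta$ as the auxiliary hypothesis, which is exactly the identity $\E[s^2]=I-1$ checked above. After that, each case is one monotone quadratic inequality in $A$ of the same shape as in theorem~\ref{thm:outer}, with the nonlinear radical $\sqrt{9x^2+16x+8}$ replaced by $\sqrt{17A^2+8A}$. The thresholds come out as the rational numbers $1/26$ and $1/208$.
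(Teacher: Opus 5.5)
Your proposal is correct and follows the paper's proof: you use $A\ge2\delta$ to replace \eqref{eq:Kbound} by $K\le1+2A$, insert this into \eqref{eq:centeredCS}, and redo the two one-variable threshold computations with the kurtosis window. Your arithmetic ($\sqrt{17A^2+8A}\le15A\iff A\ge1/26$ and $\sqrt{17A^2+8A}\le41A\iff A\ge1/208$) checks out, and the consistency check $\E[s^2]=I-1$ between the two definitions of $\delta$ is a small point the paper leaves implicit.
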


\begin{proof}
Equation~\eqref{eq:auxA2d} implies $2\delta\le A$ and hence
\begin{equation}
K=1+A+2\delta\le1+2A.
\end{equation}
Inserting this stronger estimate into \eqref{eq:centeredCS} and repeating the two one-variable bounds in theorem~\ref{thm:outer} gives $A\ge1/26$ and $A\ge1/208$, respectively.
\end{proof}

\section{A rigorous nonlinear neighborhood of the Gaussian}
\label{sec:local}

\begin{theorem}[Nonlinear Gaussian-neighborhood theorem]
\label{thm:local}
Let $p$ be even with mean zero and variance one, write $p=r\phiG$, and assume
\begin{equation}
\norm{r-1}_{\infty}\le\eta<1.
\end{equation}
If
\begin{equation}
2\eta^3-2\eta^2-5\eta+1\ge0,
\label{eq:eta-poly}
\end{equation}
then
\begin{equation}
\boxed{\E_p[(s')^2]\ge2\E_p[s^2].}
\end{equation}
In particular this holds for
\begin{equation}
\boxed{\eta\le\eta_*=0.1884695274\ldots,}
\end{equation}
where $\eta_*$ is the unique root of \eqref{eq:eta-poly} in $(0,1)$.
\end{theorem}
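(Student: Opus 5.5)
The plan is to compare the two functionals $\E_p[(s')^2]$ and $\E_p[s^2]$ with their Gaussian counterparts, where the Hermite spectral decomposition is explicit, and to use the variance constraint \eqref{eq:score-orthogonality} to control the only dangerous Hermite component. Since $p$ is even, $r$ is even and $s=(\log r)'$ is odd. Expanded in Hermite polynomials in $L^2(\phiG)$, $s$ therefore contains only odd modes $\He_1,\He_3,\He_5,\dots$. For the Ornstein--Uhlenbeck operator $\E_\phiG[(f')^2]=\sum_n n\,\hat f_n^2\,n!$, so every odd mode of degree $\ge3$ carries Dirichlet energy at least $3$ times its $L^2(\phiG)$ mass. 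Only the $\He_1=x$ mode has ratio $1<2$. The whole proof consists of showing that the $x$-component of $s$ is small, of relative size $O(\eta)$.

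Concretely, I would write $s=c\,x+w$ with $c=\E_\phiG[xs]$ and $w\perp_{L^2(\phiG)}x$, where $w$ is odd with only modes of degree $\ge3$. Then
\begin{equation}
\E_\phiG[(s')^2]=c^2+\E_\phiG[(w')^2]\ge c^2+3\E_\phiG[w^2]=3\E_\phiG[s^2]-2c^2 .
\end{equation}
To bound $c$, I use the second identity of \eqref{eq:score-orthogonality}, $\E_p[xs]=\E_\phiG[xs\,r]=0$, which gives
\begin{equation}
c=-\E_\phiG[xs(r-1)] .
\end{equation}
Cauchy--Schwarz together with $\E_\phiG[x^2]=1$ then yields $c^2\le\eta^2\E_\phiG[s^2]$, and hence
\begin{equation}
\E_\phiG[(s')^2]\ge(3-2\eta^2)\E_\phiG[s^2].
\end{equation}

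Next I transfer back to $p$ using $1-\eta\le r\le1+\eta$ pointwise:
\begin{equation}
\E_p[(s')^2]\ge(1-\eta)\E_\phiG[(s')^2],\qquad
\E_p[s^2]\le(1+\eta)\E_\phiG[s^2].
\end{equation}
Combining these, it suffices to have $(1-\eta)(3-2\eta^2)\ge2(1+\eta)$. Expanding gives $3-3\eta-2\eta^2+2\eta^3\ge2+2\eta$, which is exactly $2\eta^3-2\eta^2-5\eta+1\ge0$, i.e.\ \eqref{eq:eta-poly}. Finally, I check that the cubic is $1>0$ at $0$ and $-4<0$ at $1$, and that it is decreasing on $(0,1)$ (its derivative is $6\eta^2-4\eta-5<0$ there). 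So it has a unique root $\eta_*\in(0,1)$, and the condition holds precisely for $\eta\le\eta_*$. A numerical root computation gives $\eta_*=0.18846952\ldots$.

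The algebra is routine. The step I expect to need care is the functional-analytic justification: that $s\in L^2(\phiG)$ with $s'\in L^2(\phiG)$, so that the Hermite expansion and the identity $\E_\phiG[(f')^2]=\sum n\,n!\,\hat f_n^2$ apply, and that the integration by parts behind $\E_p[xs]=0$ holds. If $\E_p[(s')^2]=\infty$ the inequality is trivial. Otherwise, boundedness of $r$ above and below makes the $L^2(p)$ and $L^2(\phiG)$ norms equivalent, so $s\in H^1(\phiG)$ whenever $\E_p[s^2]<\infty$. If needed, I would first prove the inequality for truncated/mollified $s$ (keeping oddness) and pass to the limit by lower semicontinuity of the Dirichlet form.
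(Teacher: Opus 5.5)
Your proposal is correct and follows the paper's proof essentially step for step. Both arguments use the split $s=cx+w$ with the degree-$\ge3$ odd Hermite gap to get $\E_{\phiG}[(s')^2]\ge3M-2c^2$, then the bound $|c|\le\eta\sqrt{M}$ from the variance orthogonality $\E_p[xs]=0$, and finally the transfer via $1-\eta\le r\le1+\eta$, which yields exactly the cubic condition; your monotonicity check of the cubic and the $H^1(\phiG)$ justification are extra care that the paper omits, not a different route.
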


\begin{proof}
Let $M=\E_{\phiG}[s^2]$ and $c=\E_{\phiG}[xs]$. Since $s$ is odd, write $s=cx+u$ with $u$ odd and orthogonal to the first Hermite mode. The remaining odd Gaussian modes have degree at least three, so
\begin{equation}
\E_{\phiG}[(u')^2]\ge3\E_{\phiG}[u^2].
\end{equation}
Therefore
\begin{equation}
\E_{\phiG}[(s')^2]\ge3M-2c^2.
\end{equation}
The exact variance orthogonality \eqref{eq:score-orthogonality} gives
\begin{equation}
c=\E_{\phiG}[(1-r)xs],
\end{equation}
so $|c|\le\eta\sqrt M$. Hence
\begin{equation}
\E_{\phiG}[(s')^2]\ge(3-2\eta^2)M.
\end{equation}
Finally
\begin{equation}
\E_p[(s')^2]\ge(1-\eta)(3-2\eta^2)M,
\qquad
\E_p[s^2]\le(1+\eta)M.
\end{equation}
The factor two follows exactly when \eqref{eq:eta-poly} holds.
\end{proof}

Combining theorem~\ref{thm:local} with corollary~\ref{cor:conditional-outer} makes sections~\ref{sec:spectral}--\ref{sec:local} directly relevant to the endpoint theorem: inside this explicit Gaussian neighborhood, the outer-region constants improve from $(A_+,A_-)$ to $(1/26,1/208)$.

\begin{remark}[Two different notions of ``near Gaussian'']
The hypothesis of theorem~\ref{thm:local} is a global relative-density bound: $(1-\eta)\phiG\le p\le(1+\eta)\phiG$. It does \emph{not} include the fourth-Hermite family of section~\ref{sec:h4}, because for every $a>0$,
\begin{equation}
\left\|\frac{\phiG(1+a\He_4)}{\phiG}-1\right\|_\infty
=a\|\He_4\|_\infty=\infty.
\end{equation}
Thus the local theorem and the Hermite theorem probe complementary perturbative regimes rather than nested neighborhoods.
\end{remark}

\begin{figure}[t]
\centering
\includegraphics[width=.7\textwidth]{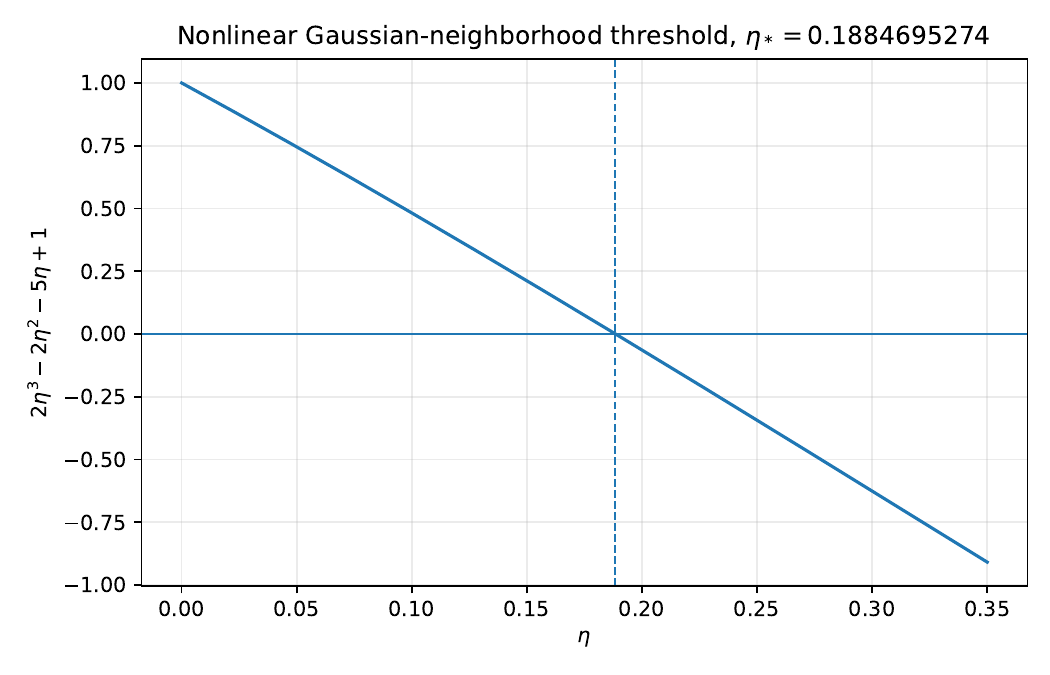}
\caption{The polynomial condition in theorem~\ref{thm:local}. The root is reported consistently as $\eta_*=0.1884695274\ldots$.}
\label{fig:local}
\end{figure}

\section{Noise-induced restoration for the leading symmetric Hermite mode}
\label{sec:h4}

In kinetic theory and near-equilibrium statistical mechanics, deviations from a Maxwellian are often expanded in Hermite or Sonine modes. For an even state with fixed mass, mean and variance, the first possible non-Gaussian mode is the fourth Hermite polynomial. The next theorem is therefore a precise restoration result for the leading symmetric nonequilibrium correction.

Let
\begin{equation}
\He_4(x)=x^4-6x^2+3,
\qquad
f_a(x)=\phiG(x)[1+a\He_4(x)],
\qquad 0<a\le10^{-2}.
\label{eq:h4family}
\end{equation}
Figure~\ref{fig:thermal-h4} shows the damping of this mode under Gaussian thermalization: after applying the $q$-smoothing, the relative density is $1+aq^2\He_4$.

\begin{figure}[t]
\centering
\includegraphics[width=.74\textwidth]{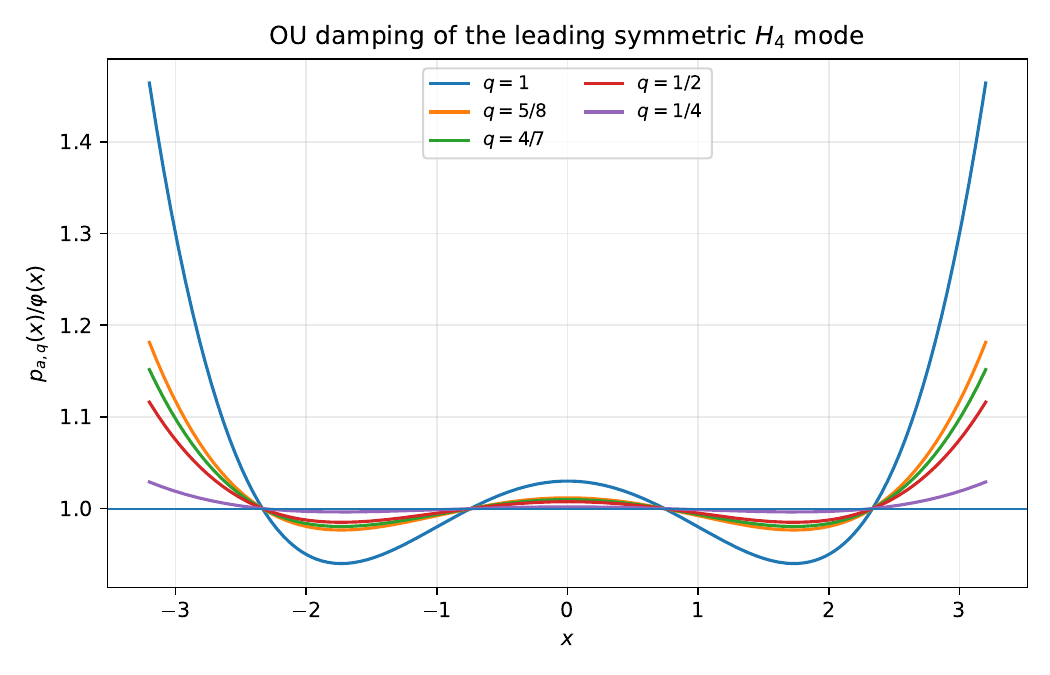}
\caption{Partial thermalization of the leading symmetric nonequilibrium mode. The $\He_4$ amplitude is damped by $q^2$, so Gaussian noise rapidly suppresses the leading kurtosis distortion.}
\label{fig:thermal-h4}
\end{figure}

Gaussian orthogonality gives mass one and variance one. Furthermore
\begin{equation}
P_a^2-P_aP_a''+(P_a')^2
=a^2x^4[(x^2-4)^2+14]+2a(x^2-6)^2+1-54a+45a^2>0,
\end{equation}
so $f_a$ is globally strictly log-concave on the stated amplitude interval.

\begin{theorem}[Uniform concavity for a fourth-Hermite family]
\label{thm:h4}
Let $U,V$ be independent copies of the $q$-smoothed version of \eqref{eq:h4family}. For every
\begin{equation}
0<a\le10^{-2},
\qquad
0<q\le\frac58,
\qquad
0<t<1,
\end{equation}
one has
\begin{equation}
\boxed{F_{a,q}''(t)\le-\frac{(aq^2)^2}{10}<0.}
\label{eq:h4bound}
\end{equation}
In particular, this explicit non-Gaussian family is strictly concave throughout $0<q\le4/7$.
\end{theorem}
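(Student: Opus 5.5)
The plan is to exploit that Gaussian smoothing acts diagonally on Hermite modes, so that every density in the problem is $\phiG$ times an explicit polynomial in $x$. Then $F_{a,q}''(t)$ can be written as an exact Gaussian integral. It splits into a leading quadratic term whose sign is manifest and a remainder controlled by crude Hermite-moment bounds.

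\textbf{Step 1 (exact law of the mixture).} By the remark preceding the statement, $U$ and $V$ have density $\phiG(1+b\He_4)$ with $b:=aq^2$. Under the hypotheses, $0<b\le 10^{-2}\cdot(5/8)^2<4\cdot10^{-3}$. The characteristic function of $U$ is $e^{-\xi^2/2}(1+b\xi^4)$, since $\widehat{\phiG\He_n}(\xi)=(i\xi)^n e^{-\xi^2/2}$. Hence $W_t=\sqrt{1-t}\,U+\sqrt t\,V$ has characteristic function
\[
e^{-\xi^2/2}\bigl(1+b(1-t)^2\xi^4\bigr)\bigl(1+bt^2\xi^4\bigr).
\]
Its density is therefore $\phiG(1+P_t)$ with
\[
P_t=b\,c(t)\He_4+b^2e(t)\He_8,\qquad c(t)=1-2t+2t^2,\qquad e(t)=t^2(1-t)^2.
\]
So $F(t)=h(\phiG)-\int\phiG\,(1+P_t)\log(1+P_t)$, using $\int\phiG P_t=0$. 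Positivity of $1+P_t$ follows from the log-concavity/positivity of the components; quantitatively, $\He_4\ge-6$ and a lower bound on $\He_8$ give $1+P_t\ge1-m$ with $m\approx0.03$.

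\textbf{Step 2 (exact second derivative and leading term).} Differentiating twice, and using $\int\phiG P_t''=0$, gives
\[
F''(t)=-\int\phiG\,P_t''\log(1+P_t)-\int\phiG\,\frac{(P_t')^2}{1+P_t}.
\]
Replacing $\log(1+P)$ by $P$ and $1/(1+P)$ by $1$ yields the quadratic part $-\tfrac12\frac{\dd^2}{\dd t^2}\int\phiG P_t^2$. By orthogonality, $\int\phiG P_t^2=24b^2c^2+8!\,b^4e^2$. Since $(c^2)''=2(c')^2+2cc''\ge8c\ge4$, the $b^2$ part contributes at most $-48b^2$, uniformly in $t$. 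The $b^4$ part is a bounded polynomial in $t$ times $b^4$, negligible for $b<4\cdot10^{-3}$.

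\textbf{Step 3 (remainder, the main obstacle).} I would write $\log(1+P)=P-R_1$ and $1/(1+P)=1-R_2$. On the set $P\ge-m$ these satisfy $0\le R_1\le P^2/(2(1-m))$ and $|R_2|\le|P|/(1-m)$. The error terms are then bounded by Cauchy--Schwarz and hypercontractive (or directly computed) Gaussian moments of $\He_4,\He_8$, such as $\E_{\phiG}[\He_4^4]$ and $\E_{\phiG}[\He_4^2\He_8^2]$. Also $|P_t'|,|P_t''|\le b\,(|c'|+|c''|)|\He_4|+O(b^2)$, with $|c'|\le2$ and $c''=4$. This gives a remainder of size at most $C b^3$ with an explicit $C$.

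The delicate point is that $P_t$ is unbounded, so no $L^\infty$ Taylor bound is available. Only the one-sided bound $P\ge-m$ holds, and the large-$|x|$ region must be absorbed by Gaussian weights. This is where the smallness $b\le 10^{-2}(5/8)^2$ is spent, and I expect it to explain why the theorem restricts to $q\le5/8$ rather than all $q$. Provided $Cb\le 48-\tfrac1{10}$, which holds comfortably if the constants are tracked honestly, the conclusion follows: $F''_{a,q}(t)\le-48b^2+Cb^3\le-b^2/10$ for all $0<t<1$. The case $q\le4/7$ is immediate since $4/7<5/8$.
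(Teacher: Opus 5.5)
\textbf{Verdict: same architecture as the paper, but the decisive quantitative step (Step~3) would fail as you describe it.}

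Your architecture is the paper's. You have the exact output density $\phiG(1+P_t)$ with $P_t=bS\He_4+b^2z^2\He_8$ (where $S=1-2t+2t^2$, $z=t(1-t)$), the two-term curvature formula, the quadratic piece bounded by $-48b^2+2520b^4$, and second-order Taylor remainders where $1+P_t$ is bounded below. Step~3 carries the whole quantitative content of the theorem, including the value $5/8$. Your claim that $Cb\le48-\frac1{10}$ holds ``comfortably'' is false for the scheme you describe. It also contradicts your own expectation that this step is what forces $q\le5/8$.

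\textbf{Why the uniform split fails.} With $F''\le-48b^2+Cb^3$ uniform in $t$, you need $C\le47.9\cdot256\approx1.23\times10^4$ at $b=1/256$.
\begin{itemize}
\item Keep $|P_t'|\le b(2|\He_4|+\frac b5|\He_8|)$ and $|P_t''|\le b(4|\He_4|+2b|\He_8|)$ separately. Read literally, your combined bound $b(|c'|+|c''|)|\He_4|=6b|\He_4|$ for both raises the leading coefficient from $6$ to $39$, which is hopeless.
\item With separate bounds, in your notation $1+P_t\ge1-m$, the leading cubic remainder is $\frac{6b^3}{1-m}\E|\He_4|^3$.
\item Your reduction to even moments gives $\E|\He_4|^3\le(24\,\E\He_4^4)^{1/2}=(24\cdot368064)^{1/2}\approx2972$, hence $C\approx1.8\times10^4$.
\item Hypercontractivity is worse: $\E|\He_4|^3\le(4\sqrt{24})^3\approx7.5\times10^3$.
\end{itemize}
Both routes fail at $q=5/8$; as the paper notes, Cauchy--Schwarz ceilings only reach $q\le1/2$. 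The paper closes the step by computing four absolute moments exactly, splitting the line at the real roots of $\He_4$ and $\He_8$: $\E|\He_4|^3\approx1787.4$, $\E[\He_4^2|\He_8|]$, $\E[|\He_4|\He_8^2]$, $\E|\He_8|^3$. It then checks a rational inequality at $\epsilon=1/256$. Even so it gets only $F''/b^2\le-0.3617$. The remainder absorbs over $99\%$ of the $-48b^2$ main term, and the certificate breaks down near $q\approx0.6265$.

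\textbf{How to repair it.} You can either reproduce that sharp-moment certificate or, more robustly, stop decoupling the $t$-dependence.
\begin{itemize}
\item The $b^2$ part of the quadratic piece is exactly $-96b^2(3S-1)$ with $S\in[\frac12,1]$. It equals $-192b^2$ at the endpoints and is only $-48b^2$ at $t=\frac12$.
\item The leading cubic remainder is at most $\frac{b^3}{1-m}S(10S-4)\E|\He_4|^3$, which is largest at the endpoints.
\item The ratio $S(10S-4)/(96(3S-1))$ increases on $[\frac12,1]$ to the value $\frac1{32}$. So even the Cauchy--Schwarz value $2972$ keeps the leading remainder below about $38\%$ of the main term at every $t$ when $b=1/256$.
\item That leaves ample room for the $\He_8$ contributions.
\end{itemize}
The paper's uniform bound pairs the worst main term ($t=\frac12$) with the worst remainder ($t\to0,1$), which is why it needs sharp moments. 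Your proposal inherits that pairing without supplying the sharp moments.
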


\begin{proof}
Put $\epsilon=aq^2$. Since $a\le1/100$ and $q\le5/8$,
\begin{equation}
0<\epsilon\le\frac1{256}.
\end{equation}
Let $z=t(1-t)$ and $S=(1-t)^2+t^2$. The output relative density is exactly
\begin{equation}
R=1+w,
\qquad
w=\epsilon S\He_4+\epsilon^2z^2\He_8,
\label{eq:h4R}
\end{equation}
where
\begin{equation}
\He_8=x^8-28x^6+210x^4-420x^2+105.
\end{equation}
An orthogonal-change-of-variables representation gives
\begin{equation}
R\ge(1-6\epsilon)^2=:m>0.
\end{equation}
The entropy curvature is
\begin{equation}
F''=-\E_{\phiG}\!\left[w_{tt}\log(1+w)+\frac{w_t^2}{1+w}\right].
\label{eq:h4curv}
\end{equation}
The exact quadratic piece is
\begin{equation}
Q_2=-192\epsilon^2(1-3z)-40320\epsilon^4z^2(6-28z)
\le-48\epsilon^2+2520\epsilon^4.
\label{eq:Q2}
\end{equation}
For $1+w\ge m$,
\begin{equation}
|\log(1+w)-w|\le\frac{w^2}{2m},
\qquad
\left|\frac1{1+w}-1\right|\le\frac{|w|}{m}.
\end{equation}
Writing $H=|\He_4|$ and $J_8=|\He_8|$ gives
\begin{align}
|w|&\le\epsilon\left(H+\frac{\epsilon}{16}J_8\right),\\
|w_t|&\le\epsilon\left(2H+\frac{\epsilon}{5}J_8\right),\\
|w_{tt}|&\le\epsilon(4H+2\epsilon J_8).
\end{align}
Consequently
\begin{equation}
\frac{|F''-Q_2|}{\epsilon^2}\le
\frac{\epsilon}{(1-6\epsilon)^2}
\left[
6C_{30}+\frac{23}{10}\epsilon C_{21}
+\frac{713}{3200}\epsilon^2C_{12}
+\frac{41}{6400}\epsilon^3C_{03}
\right].
\label{eq:remainder-bound}
\end{equation}
The four absolute Gaussian moments are evaluated by splitting the real line at the real roots of $\He_4$ or $\He_8$ and integrating each polynomial-Gaussian piece with the standard recurrence for $\int x^n\phiG(x)\dd x$. The resulting values and conservative integer ceilings are shown in table~\ref{tab:h4moments}. In particular we may take
\begin{equation}
(C_{30},C_{21},C_{12},C_{03})=(1800,86000,13700000,4800000000).
\end{equation}
All coefficients in \eqref{eq:remainder-bound} are positive, so the upper bound is increasing in $\epsilon$. At $\epsilon=1/256$ the entire comparison is rational and gives
\begin{equation}
\frac{F''}{\epsilon^2}
\le
-\frac{3703371}{10240000}
=-0.36165732421875
<-\frac1{10}.
\end{equation}
This proves \eqref{eq:h4bound} on the full parameter box.
\end{proof}

\begin{table}[t]
\centering
\begin{tabular}{lrrr}
\toprule
Moment & root-split value & ceiling used & old Cauchy--Schwarz ceiling \\
\midrule
$\E|\He_4|^3$ & $1787.40340119$ & $1800$ & $2973$ \\
$\E[\He_4^2|\He_8|]$ & $85636.3171956$ & $86000$ & $121821$ \\
$\E[|\He_4|\He_8^2]$ & $13600577.4885$ & $13700000$ & $20195129$ \\
$\E|\He_8|^3$ & $4773894164.964$ & $4800000000$ & $11892749013$ \\
\bottomrule
\end{tabular}
\caption{Absolute Gaussian moments entering the exact $H_4$ certificate. The values come from piecewise polynomial-Gaussian integration between the real Hermite roots; the theorem uses only the conservative integer ceilings in the middle column. The last column shows why the original $q\le1/2$ certificate was artificially tight.}
\label{tab:h4moments}
\end{table}

The conservative ceilings above actually keep the $-\epsilon^2/10$ certificate valid up to $q\simeq0.6265$ at $a=10^{-2}$; we state the exact rational range $q\le5/8$ to avoid attaching the theorem to a decimal optimization. The result should also be interpreted carefully: it does not solve the symmetric conjecture on $0<q\le4/7$ for arbitrary log-concave inputs; it proves strict concavity throughout that entire signal range for this explicit family.

This family is not a Gaussian scale mixture. Indeed $\E[\He_8(X)]=0$ but $\E[\He_4(X)]=24a>0$. If $X=SG$ with $S\ge0$ independent of a standard Gaussian $G$, then $\E[\He_8(SG)]=105\E[(S^2-1)^4]$, whose vanishing forces $S^2=1$ almost surely.

\begin{figure}[t]
\centering
\includegraphics[width=.74\textwidth]{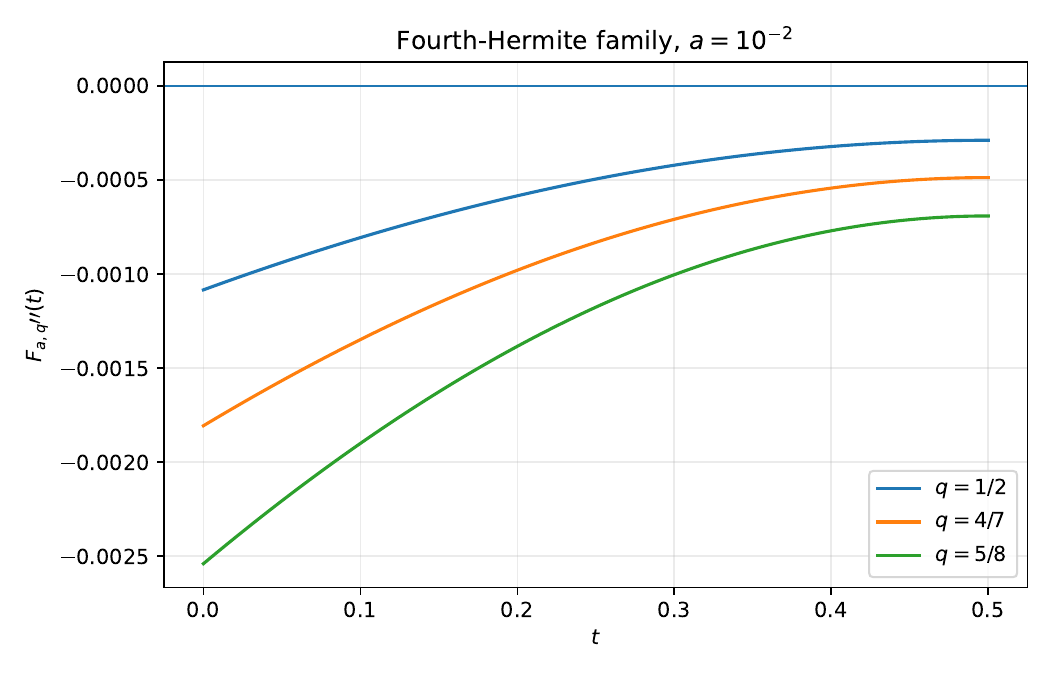}
\caption{Direct quadrature for the fourth-Hermite family at $a=10^{-2}$ for three signal fractions: $q=1/2$, $q=4/7$, and the theorem endpoint $q=5/8$. These curves illustrate the result; the sign theorem itself follows from the analytic certificate above.}
\label{fig:h4}
\end{figure}

\section{Numerical evidence and validation}
\label{sec:numerics}

The ancillary code bundle also contains exploratory searches over finite Hermite expansions and over even convex piecewise-quadratic potentials. These searches are deliberately separated from the theorem statements. The Hermite scan tests positivity and log-concavity on a finite spatial grid and therefore cannot certify global admissibility; the piecewise-convex scan imposes convexity structurally but still evaluates entropy curvature numerically. In the larger endpoint scan no positive curvature was found among $5000$ random symmetric convex-potential inputs at half-Gaussian smoothing. The sign-mismatch example quoted above was stable under quadrature refinement and an independent Fourier implementation.

The numerical layer has three purposes only: validation of analytic formulas, discovery of false shortcuts, and identification of parameter regions where a rigorous theorem might be sharpened. It is not used to infer universal concavity.

\section{Discussion and outlook}

From the statistical-physics perspective, the results quantify when contact with a Gaussian heat bath restores an entropy-stability property that fails in full generality without noise. The strongest all-mixing theorem concerns the leading parity-even nonequilibrium mode around equilibrium and gives an explicit thermalization threshold $q\le5/8$.

The exact endpoint identity \eqref{eq:endpoint-main} turns the symmetric weighted-sum problem into the scalar inequality \eqref{eq:scalar-target}. The weak-regularity lemma is important here: generalized Gaussians with $3/2<p<3$ do not satisfy the naive $C^4$ density expansion at the origin, yet the endpoint quadratic coefficient remains finite and is controlled by the $L^2(f)$ curvature of the potential. Thus the positive generalized-Gaussian result is not an artifact of an unjustified smooth Taylor series.

The half-Gaussian outer theorem shows that any endpoint obstruction must be simultaneously sign-mismatched and close to Gaussian in the information-production variable $A$. Sections~\ref{sec:spectral}--\ref{sec:local} then isolate the auxiliary inequality $A\ge2\delta$, identify it with a spectral moment and an MMSE-slope bound, and make its relevance explicit through corollary~\ref{cor:conditional-outer}. The global $L^\infty$ Gaussian neighborhood and the fourth-Hermite amplitude family are intentionally complementary: neither contains the other.

The strengthened fourth-Hermite theorem is the cleanest all-weight result in the paper. The improved root-split absolute moments eliminate most of the slack in the original certificate and extend the theorem from $q\le1/2$ to the exact rational range $q\le5/8$. In particular, this explicit non-Gaussian, non-Gaussian-mixture family is strictly concave throughout the signal interval $0<q\le4/7$ that is especially relevant to the unresolved symmetric problem.

We do not claim that the full symmetric conjecture is solved. The remaining challenge is to convert the structural posterior, spectral and Fisher-information information into a global inequality without imposing either a relative-density $L^\infty$ bound or a finite-dimensional Hermite ansatz.

\acknowledgments
We used computer algebra and numerical experiments for verification and exploration. Every theorem statement in the paper is supported by an analytic proof or an explicit reproducible certificate; exploratory searches are labeled separately. {\it The use of AI disclosure}: Parts of the numerical cross-checks, the figures and the editing of the text were carried out with the help of an AI assistant (ChatGPT); all results were checked by the author, who takes full responsibility for the content.

\appendix
\section{Weak-regularity endpoint limit for generalized Gaussians}
\label{app:weak}

The point of this appendix is to make clear what is, and is not, being differentiated at the cusp of $|x|^p$. Let
\begin{equation}
V_\varepsilon(x)=a_\varepsilon(x^2+\varepsilon^2)^{p/2},
\qquad p>\frac32,
\end{equation}
with $a_\varepsilon$ fixed by unit variance. Then
\begin{align}
V_\varepsilon'(x)&=a_\varepsilon p\,x(x^2+\varepsilon^2)^{p/2-1},\\
V_\varepsilon''(x)&=a_\varepsilon p(x^2+\varepsilon^2)^{p/2-2}\left[\varepsilon^2+(p-1)x^2\right].
\end{align}
For $|x|\le1$,
\begin{equation}
|V_\varepsilon''(x)|^2\le C_p(x^2+\varepsilon^2)^{p-2}.
\end{equation}
The integral of the right-hand side is uniformly bounded near zero iff $p>3/2$. Similarly $|V_\varepsilon'|^4$ is uniformly integrable, and the exponential tails dominate all polynomial moments. Consequently the three quantities appearing in \eqref{eq:endpoint-V} converge.

For the entropy itself, one may write the $t$-difference quotient in terms of the weak first two coefficients of the convolution/scaling semigroup. The $V''\in L^2(f)$ and $V'\in L^4(f)$ bounds above give a common integrable majorant for the quadratic term. Thus
\begin{equation}
F_p(t)=F_p(0)+F_p'(0^+)t+\frac12F_p''(0^+)t^2+o(t^2),
\end{equation}
and the coefficient is the limit of the smooth coefficients. No $C^3$ statement is needed. For non-even-integer $p$ one should in fact expect nonanalytic higher-order terms from the cusp; this is why theorem~\ref{thm:smooth-endpoint} is not applied directly to $f_p$.

\section{Useful integration-by-parts identities}
For a smooth positive density $f$ with score $\rho=f'/f$ and adequate decay,
\begin{equation}
\E[\rho(X)\psi(X)]=-\E[\psi'(X)].
\end{equation}
Taking $\psi=\rho^3/3$ gives \eqref{eq:score-id}; taking $\psi=x$ gives $\E[X\rho]=-1$. When $f=e^{-V}/Z$, $\rho=-V'$ and $\E[V'']=\E[(V')^2]$.

\section{Gaussian-channel identities at half smoothing}
For $Z=X+G$ and $m(z)=\E[X\mid Z=z]$, Tweedie's identity gives
\begin{equation}
(\log p_Z)'(z)=m(z)-z.
\end{equation}
Differentiating and using $m'(z)=\Var(X\mid Z=z)$ gives
\begin{equation}
(\log p_Z)''(z)=v(z)-1.
\end{equation}
After rescaling $U=Z/\sqrt2$ this becomes \eqref{eq:posterior-g}. Higher conditional derivatives can be organized through posterior cumulants \cite{DytsoPoorShamai2021}.

\section{Absolute-moment bookkeeping for theorem~\ref{thm:h4}}
The four moments in table~\ref{tab:h4moments} are integrals of an absolute polynomial against a Gaussian. The sign can change only at roots of $\He_4$ or $\He_8$. On each sign interval the integrand is an ordinary polynomial times $\phiG$. For
\begin{equation}
I_n(a,b)=\int_a^b x^n\phiG(x)\,\dd x,
\end{equation}
the recurrence
\begin{equation}
I_n(a,b)=a^{n-1}\phiG(a)-b^{n-1}\phiG(b)+(n-1)I_{n-2}(a,b)
\end{equation}
reduces every piece to Gaussian boundary terms and $I_0$. The verifier evaluates the resulting finite expressions at high precision and checks the deliberately rounded integer ceilings. The positive roots used by the script are
\begin{align}
\He_4:&\quad 0.7419637843\ldots,\ 2.3344142183\ldots,\\
\He_8:&\quad 0.5390798114\ldots,\ 1.6365190424\ldots,\ 2.8024858613\ldots,\ 4.1445471861\ldots.
\end{align}
The final theorem-level sign decision itself is rational at $\epsilon=1/256$.


\begin{thebibliography}{99}
\bibitem{BallNayarTkocz2016}
K.~Ball, P.~Nayar and T.~Tkocz,
\emph{A reverse entropy power inequality for log-concave random vectors},
Studia Math. \textbf{235} (2016) 17--30,
arXiv:1509.05926.

\bibitem{Luo2026}
C.~Luo,
\emph{Entropy concavity for log-concave random variables: an asymmetric counterexample},
arXiv:2609.11418 (2026).

\bibitem{EskenazisGavalakis2024}
A.~Eskenazis and L.~Gavalakis,
\emph{On the entropy and information of Gaussian mixtures},
Mathematika \textbf{70} (2024) e12246,
arXiv:2308.15997.

\bibitem{EskenazisNayarTkocz2018Gaussian}
A.~Eskenazis, P.~Nayar and T.~Tkocz,
\emph{Gaussian mixtures: entropy and geometric inequalities},
Ann. Probab. \textbf{46} (2018) 2908--2945,
doi:10.1214/17-AOP1242.

\bibitem{EskenazisNayarTkocz2018Moments}
A.~Eskenazis, P.~Nayar and T.~Tkocz,
\emph{Sharp comparison of moments and the log-concave moment problem},
arXiv:1801.07597 (2018).

\bibitem{ZhangAnantharamGeng2018}
X.~Zhang, V.~Anantharam and Y.~Geng,
\emph{Gaussian optimality for derivatives of differential entropy using linear matrix inequalities},
Entropy \textbf{20} (2018) 182,
doi:10.3390/e20030182.

\bibitem{BallNguyen2012}
K.~Ball and V.H.~Nguyen,
\emph{Entropy jumps for isotropic log-concave random vectors and spectral gap},
Studia Math. \textbf{213} (2012) 81--96,
arXiv:1206.5098.

\bibitem{Bizeul2023}
P.~Bizeul,
\emph{Entropy and Information jump for log-concave vectors},
C. R. Math. \textbf{361} (2023) 487--493,
doi:10.5802/crmath.390.

\bibitem{PayaroPalomar2009}
M.~Payar\'o and D.P.~Palomar,
\emph{Hessian and concavity of mutual information, differential entropy, and entropy power in linear vector Gaussian channels},
arXiv:0903.1945.

\bibitem{AbbeZheng2012}
E.~Abbe and L.~Zheng,
\emph{A coordinate system for Gaussian networks},
arXiv:1002.2271.

\bibitem{DytsoPoorShamai2021}
A.~Dytso, H.V.~Poor and S.~Shamai,
\emph{A general derivative identity for the conditional mean estimator in Gaussian noise and some applications},
arXiv:2104.01883.

\bibitem{BrascampLieb1976}
H.J.~Brascamp and E.H.~Lieb,
\emph{On extensions of the Brunn--Minkowski and Pr\'ekopa--Leindler theorems, including inequalities for log concave functions, and with an application to the diffusion equation},
J. Funct. Anal. \textbf{22} (1976) 366--389.

\bibitem{GuoShamaiVerdu2005}
D.~Guo, S.~Shamai and S.~Verd\'u,
\emph{Mutual information and minimum mean-square error in Gaussian channels},
IEEE Trans. Inf. Theory \textbf{51} (2005) 1261--1282.

\bibitem{SaumardWellner2014}
A.~Saumard and J.A.~Wellner,
\emph{Log-concavity and strong log-concavity: a review},
Stat. Surv. \textbf{8} (2014) 45--114.

\bibitem{Risken1989}
H.~Risken,
\emph{The Fokker--Planck Equation: Methods of Solution and Applications},
2nd ed., Springer, Berlin (1989).

\bibitem{Seifert2012}
U.~Seifert,
\emph{Stochastic thermodynamics, fluctuation theorems and molecular machines},
Rep. Prog. Phys. \textbf{75} (2012) 126001.

\bibitem{Villani2009}
C.~Villani,
\emph{Optimal Transport: Old and New},
Springer, Berlin (2009).

\bibitem{Grad1949}
H.~Grad,
\emph{On the kinetic theory of rarefied gases},
Commun. Pure Appl. Math. \textbf{2} (1949) 331--407.

\bibitem{Cercignani1988}
C.~Cercignani,
\emph{The Boltzmann Equation and Its Applications},
Springer, New York (1988).
\end{thebibliography}
\end{document}